\newcolumntype{C}{>{$}c<{$}}
\newcolumntype{L}{>{$}l<{$}}
\newcolumntype{R}{>{$}r<{$}}
\renewcommand{\arraystretch}{1.1}
\newtheorem*{theorem}{Theorem}
\newtheorem*{corollary}{Corollary}
\DeclareMathOperator{\gh}{gh}
\DeclareMathOperator{\pa}{p}
\DeclareMathOperator{\grad}{grad}
\newcommand{\A}{\mathcal{A}}
\newcommand{\CE}{\mathcal{E}}
\newcommand{\CO}{\mathcal{O}}
\newcommand{\F}{\mathcal{F}}
\newcommand{\I}{\mathcal{I}}
\newcommand{\CC}{\mathbb{C}}
\newcommand{\Z}{\mathbb{Z}}
\newcommand{\R}{\mathcal{R}}
\newcommand{\RR}{\mathbb{R}}
\newcommand{\p}{\partial}
\newcommand{\half}{\tfrac{1}{2}}
\newcommand{\Q}{\mathsf{Q}}
\newcommand{\Om}{\Omega}
\renewcommand{\[}{\{\!\{}
\renewcommand{\P}{\mathsf{P}}
\newcommand{\X}{\mathsf{X}}
\newcommand{\E}{\mathsf{E}}
\newcommand{\C}{\mathsf{C}}
\newcommand{\s}{\mathsf{s}}
\newcommand{\eps}{\varepsilon}
\newcommand{\g}{\mathsf{g}}
\title{The Batalin-Vilkovisky cohomology of the spinning particle}
\author{Ezra Getzler}
\affiliation{Department of Mathematics, Northwestern University,
  Evanston, Illinois, 60657 USA}
\emailAdd{getzler@northwestern.edu}
\keywords{Spinning particle, Batalin-Vilkovisky formalism, classical
  master equation}
\abstract{We show that the axiom of Felder and Kazhdan \citep{FK} on
  the vanishing of the cohomology groups in negative degree associated
  to solutions of the classical master equation in the
  Batalin-Vilkovisky formalism is violated by the spinning particle in
  a flat background coupled to $D=1$ supergravity. In this model,
  there are nontrivial cohomology groups in all negative degrees,
  regardless of the dimension of the spacetime in which the spinning
  particle is propagating.}
\begin{document}

\maketitle
\flushbottom

\newpage

In the 1970s, the BRST formalism was introduced as a tool in the
pertubative quantization of gauge theories: it permits the use of
rather general, non-gauge invariant regularizations, while still
guaranteeing the gauge invariance of the pertubatively quantized
amplitudes. The Batalin-Vilkovisky (BV) formalism was introduced in
the 1980s to extend the BRST formalism to more general situations such
as supersymmetric theories, especially those with extended
supersymmetry, and supergravity.

The BV formalism has both classical and quantum versions. In this
paper, we restrict attention to the classical BV formalism.

We will show that in a toy model of supergravity, namely the spinning
particle, in which the worldsheet has dimension $1$, there is a series
of cohomology classes in negative degrees, violating a basic axiom for
the BV formalism which has been formulated recently by Felder and
Kazhdan \citep{FK}. The nontrivial cohomology classes which we find
are directly related to the possibility of formally inverting the
superghost of the theory, associated to local supersymmetry.

This calculation seems to indicate that the BV prescription for
associating ghost sectors to symmetries of the theory requires
modification in the presence of local supersymmetry.

\section{The Batalin-Vilkovisky formalism}

The basic characteristic of the BV formalism is the doubling of the
number of fields of the theory. Whereas in the BRST formalism, one has
a series of fields $\Phi_i$ with ghost number $\gh(\Phi_i)\ge0$, in
the BV formalism, these are supplemented by antifields $\Phi_i^+$,
with ghost number $\gh(\Phi_i^+)<0$, in such a way that
$\gh(\Phi_i)+\gh(\Phi_i^+)=-1$.

Fields with ghost number $0$ are interpreted as physical fields, while
fields with ghost number greater than $0$ are interpreted as
ghosts. (Ghosts properly speaking are fields of ghost number $1$:
fields of ghost number greater than $1$ are \emph{higher} ghosts.)
There is a similar division between those antifields with ghost number
$-1$, corresponding to physical fields, and those with ghost number
less than $-1$, corresponding to ghosts.

Besides its ghost number, each field $\Phi_i$ carries a parity
$\pa(\Phi_i)\in\{0,1\}$, defined modulo $2$, which distinguishes
between bosonic and fermionic fields. The parity of the antifield
$\Phi_i^+$ paired to a field $\Phi_i$ satisfies
$\pa(\Phi_i)=1-\pa(\Phi_i^+)$.

In this paper, we focus on the classical BV formalism with a single
independent variable $t$ (classical mechanics). Let $\p$ denote the
total derivative with respect to $t$. Denote by $\A^j$ the superspace
of all differential expressions in the fields and antifields with
$\gh(S)=j$. For example, if there is a single physical field $\phi$,
then $\A^j$ consists of all expressions in the field $\phi$ and its
derivatives $\{\p^\ell\phi\}_{\ell>0}$, and in the antifield $\phi^+$
and its derivatives $\{\p^\ell\phi^+\}_{\ell>0}$, polynomial in
$\{\p^\ell\phi\}_{\ell>0}\cup\{\p^\ell\phi^+\}_{\ell\ge0}$, and
homogenous of degree $-j$ in $\{\p^\ell\phi^+\}_{\ell\ge0}$. The sum
$\A$ of the superspaces $\A^j$ for $j\in\Z$ is a graded commutative
superalgebra.

The superspace $\F$ of functionals is the graded quotient $\A/\p\A$ of
the algebra $\A$ of currents by the subspace $\p\A$ of total
derivatives. Denote the image of $f\in\A$ in $\F$ by $\int f\,dt$.

The fields and antifields are canonical coordinates for the
Batalin-Vilkovisky antibracket, which is a Poisson bracket of degree
$1$:
\begin{equation*}
  \{ \Phi_i , \Phi_j^+ \} = - \{ \Phi_j^+ , \Phi_i \} = \delta_{ij} .
\end{equation*}
In the finite dimensional setting, the antibracket satisfies the
following equations:
\begin{description}
\item[(skew symmetry)] $\{ f , g \} = - (-1)^{(\pa(f)+1)(\pa(g)+1)} \{ g , f \}$
\item[(Jacobi)] $\{ f , \{ g , h \} \} = \{ \{ f , g \} , h \} +
  (-1)^{(\pa(f)+1)(\pa(g)+1)} \{ g , \{ f , h \} \}$
\item[(Leibniz)] $\{ f , gh \} = \{f , g \} h + (-1)^{(\pa(f)+1)\pa(g)}
  g \{ f , h \}$
\end{description}
In our setting, the space of functionals $\F$ no longer carries a
product, and we must we give up the Leibniz formula; in particular,
the antibracket is not characterized by the canonical relations
alone. Neverthless, the following antibracket makes the space of
functionals $\F$ into a graded Lie algebra:
\begin{equation*}
  { \textstyle \bigl\{ \int f\,dt , \int g\,dt \bigr\} }=
  \sum_i (-1)^{(\pa(f)+1)\pa(\Phi_i)} \int \biggl( \frac{\delta f}
  {\delta\Phi_i} \frac{\delta g}{\delta\Phi_i^+} + (-1)^{\pa(f)}
  \frac{\delta f} {\delta\Phi_i^+} \frac{\delta g}{\delta\Phi_i}
  \biggr) \, dt .
\end{equation*}
Here, $\delta/\delta\Phi_i$ and $\delta/\delta\Phi_i^+$ are the
variational derivatives, and the sum is over the fields of the
theory.

\section{The classical master equation and the Batalin-Vilkovisky
  cohomology}

The Batalin-Vilkovisky formalism for classical theory involves the
selection of a solution of the classical master equation
\begin{equation*}
  \textstyle
  \{ \int S\,dt , \int S\,dt \} = 0 ,
\end{equation*}
where $S\in\A^{0,0}$ is a differential expression in the fields and
antifields with $\gh(S)=0$ and $\pa(S)=0$. In other words, there is an
expression $\tilde{S}\in\A^{1,1}$ such that
\begin{equation*}
  \sum_i (-1)^{\pa(\Phi_i)} \frac{\delta S}{\delta\Phi_i}
  \frac{\delta S}{\delta\Phi_i^+} = \p\tilde{S} .
\end{equation*}

We may decompose $S$ into its components
\begin{equation*}
  S = S_{[0]} + S_{[1]} + \dots,
\end{equation*}
where $S_{[k]}$ is homogenous of degree $k$ in the ghosts. The BRST
formalism in its original form is the special case in which all ghost
fields have ghost number $1$, and $S_{[k]}=0$ for $k>2$.

Define an operator $\s:\F^k\to\F^{k+1}$ by the formula
\begin{equation*}
  \textstyle
  \s = \{ \int S\,dt , - \} .
\end{equation*}
By the classical master equation, this operator satisfies the equation
$\s^2=0$, and we may consider its cohomology groups
$H^*(\F,\s)$. These are called the BV cohomology associated to the
solution $S$ of the classical master equation. The graded vector space
$H^*(\F,\s)$ is a graded Lie algebra, with bracket $\{-,-\}$ of degree
$1$.

The operator $\s$ lifts to an evolutionary vector field on $\A$ of
degree $1$, characterized by the formulas
\begin{align*}
  \s\Phi_i &= (-1)^{\pa(\Phi_i)+1} \frac{\delta S}{\delta\Phi^+_i} &
  \s\Phi^+_i &=  (-1)^{\pa(\Phi_i)} \frac{\delta S}{\delta\Phi_i} .
\end{align*}
In the model considered in this paper, this vector field is
cohomological:
\begin{equation}
  \label{nilpotent}
  \s^2 = 0 .
\end{equation}
This is not the case for more typical solutions of the classical
master equation: we explain how to deal with these in a sequel to this
paper.

Since \eqref{nilpotent} holds, we may calculate the BV cohomology
groups $H^*(\F,\s)$ using the complex
\begin{equation*}
  \mathcal{V}^k = \A^k \oplus \tilde{\A}^{k+1} \, \eps ,
\end{equation*}
where
\begin{equation*}
  \tilde{\A}^k =
  \begin{cases}
    \A^0/\C , & k=0 , \\
    \A^k , & k\ne0 .
  \end{cases}
\end{equation*}
The differential on this complex is given by the formula
\begin{equation*}
  f + g\,\eps \mapsto \bigl( \s f + (-1)^{\pa(g)} \, \p g \bigr) +
  \s g\,\eps .
\end{equation*}
In this way, we obtain a long exact sequence
\begin{equation}
  \label{long}
  \begin{tikzcd}
    \cdots \arrow{r} & H^{-1}(\A,\s) \arrow{r}{\p} & H^{-1}(\A,\s)
    \arrow{r} & H^{-1}(\F,\s)  \arrow[out=-5,in=170]{lld} & \\
    & H^0(\A/\C,\s) \arrow{r}{\p} & H^0(\A,\s) \arrow{r} & H^0(\F,\s)
    \arrow[out=-5,in=170]{lld} & \\
    & H^1(\A,\s) \arrow{r}{\p} & H^1(\A,\s) \arrow{r} & H^1(\F,\s)
    \arrow{r} & \cdots
  \end{tikzcd}
\end{equation}

\section{The axiom of Felder and Kazhdan}

\label{axiom}

The above formalism is extremely general. In an attempt to constrain
the possible models to a class exhibiting the features of the
classical field theories which are of interest in theoretical physics,
Felder and Kazhdan \citep{FK} distilled from the work of Batalin and
Vilkovisky an axiom for solutions of the classical master
equation. Following their paper, we start by formulating the axiom in
the finite-dimensional setting, in which fields are replaced by
coordinates in a finite-dimensional graded supermanifold. (They
restrict attention to graded manifolds, but for theories incorporating
fermions, one must consider graded supermanifolds.)

Let $\I\subset\A$ be the ideal of $\A$ generated by the coordinates
with positive ghost number, and let $\A/\I$ be the quotient graded
superalgebra. The ideal $\I$ is closed under the action of $\s$, so the
differential descends to the quotient graded superalgebra $\A/\I$, and
we may define the cohomology $H^*(\A/\I,\s)$, which is again a graded
superalgebra. There is a quotient map
\begin{equation*}
  H^*(\A,\s) \to H^*(\A/\I,\s) ,
\end{equation*}
which is in general neither surjective nor injective.

Following Batalin and Vilkovisky, Felder and Kazhdan propose the axiom
that the cohomology of $\A/\I$ should vanish in negative degrees: if
$k>0$,
\begin{equation*}
  H^{-k}(\A/\I,\s) = 0 .
\end{equation*}
The extension of this axiom to field theory is extremely powerful: for
example, starting with the classical action for the Yang-Mills theory,
it leads to the discovery of the gauge symmetry and its ghosts: the
antifields of the ghosts are needed in order to kill the cohomology
classes $\int (\epsilon,d_AA^+)$ of degree $-1$ whose closure is a
consequence of the second Bianchi identity for the curvature. (Here,
$\epsilon$ is a section of the adjoint bundle $P\times_G\g$ associated
to the principal bundle $P$, and the antifield $A^+$ is a three-form
with values in $P\times_G\g$.)

By a spectral sequence argument, Felder and Kazdhan prove that their
axiom implies the vanishing of the cohomology of $\A$ in negative
degree: if $k>0$,
\begin{equation*}
  H^{-k}(\A,\s) = 0 .
\end{equation*}
Filtering $\A$ by powers of $\I$, one obtains a spectral sequence with
$E_2$-term
\begin{equation*}
  \bigoplus_{\ell=0}^\infty H^*(\I^\ell/\I^{\ell+1},\s) \Longrightarrow
  H^*(\A,\s) .
\end{equation*}
Each of the complexes $\I^\ell/\I^{\ell+1}$ is a free module
over $\A/\I$, with basis in strictly positive degrees: hence, if
$H^{-k}(\A/\I,\s)$ vanishes for all $k>0$, it follows that
$H^{-k}(\A,\s)$ vanishes for all $k>0$.

It is not hard to formulate the extension of the axiom of Felder and
Kazhdan to the setting of classical field theory: here, we restrict
attention to the case of a single independent variable $t$. Let
$\I\subset\A$ be the ideal of $\A$ generated by the fields with
positive ghost number and their partial derivatives, and let $\A/\I$
be the quotient graded superalgebra. There is a quotient complex
\begin{equation*}
  \F \to \A/(\p\A+\I) ,
\end{equation*}
and a quotient map
\begin{equation*}
  H^*(\F,\s) \to H^*(\A/(\p\A+\I),\s) ,
\end{equation*}
which is in general neither surjective nor injective.

The natural extension of the axiom of Felder and Kazhdan is that the
cohomology of $\A/(\p\A+\I)$ should vanish in negative degrees: if
$k>0$,
\begin{equation*}
  H^{-k}(\A/(\p\A+\I),\s) = 0 .
\end{equation*}
The spectral sequence argument of Felder and Kazdhan extends to this
setting, and shows that their axiom implies the vanishing of the
cohomology of $\F$ in degree less than $-1$: if $k>1$,
\begin{equation*}
  H^{-k}(\F,\s) = 0 .
\end{equation*}
Barnich et al.\ have shown that this holds for the pure Yang-Mills
theory on $\RR^4$, with arbitrary semisimple gauge group \citep{BBH}
(see also Costello \citep[Chapter 6, Section 5]{Costello}).
 
We will show that there exists a field theory in which the axiom of
Felder and Kazhdan fails, in the sense that $H^{-k}(\F,\s)$ is nonzero
for all $k>1$.

\section{Review of the spinning particle}

We now investigate the axiom of Felder and Kazhdan in a toy model of
supergravity, the spinning particle~\citep{spinning}. Consider the
vector space $\RR^d$ with non-degenerate inner product
\begin{equation*}
  ( e_\mu , e_\nu ) = \eta_{\mu\nu} .
\end{equation*}
The spinning particle has physical fields $x^\mu=x^\mu(t)$ and
$\theta^\mu=\theta^\mu(t)$ of parity $0$ and $1$ respectively, and action
\begin{equation*}
  S = \half \eta_{\mu\nu} \bigl( \p x^\mu \p x^\nu - \theta^\mu
  \p\theta^\nu \bigr) ,
\end{equation*}
where $\p x^\mu$ and $\p\theta^\mu$ are the derivatives of the fields
$x^\mu$ and $\theta^\mu$ with respect to the independent variable $t$
parametrizing the time-line of the particle. We denote by $\CO$ the
algebra of functions in the variables $x^\mu$. There is a lot of
flexibility in the choice of this algebra: we may take polynomials in
the variables $x^\mu$, infinitely-differentiable functions, analytic
functions, or even power series. All of our results will be
independent of this choice. Thus, $\A$ is the graded polynomial
algebra over $\CO$ generated by the remaining variables of the theory,
namely $\{\p^\ell x^\mu\}_{\ell>0}\cup
\{\p^\ell\theta,\p^\ell x^+_\mu,\p^\ell\theta^+_\mu\}_{\ell\ge0}$.

We prefer to work in a first-order formulation of this theory, which
has an additional physical field $p_\mu$, with even parity, and the
modified action
\begin{equation*}
  S = p_\mu \p x^\mu - \half \eta_{\mu\nu} \theta^\mu \p\theta^\nu -
  \half \eta^{\mu\nu} p_\mu p_\nu .
\end{equation*}
The differential $s$ on the fields and antifields of the theory is
given by the formulas
\begin{align*}
  \s x^+_\mu &= - \p p_\mu &
  \s\theta^+_\mu &= \eta_{\mu\nu} \p \theta^\nu &
  \s p^{+\mu} &= \p x^\mu - \eta^{\mu\nu} p_\nu \\
  \s x^\mu &= 0 & \s\theta^\mu &= 0 & \s p_\mu &= 0
\end{align*}
This differential is an example of a Koszul complex, and the
cohomology $H^{-k}(\A/\I,\s)$ vanishes in negative degree, and in
degree $0$ is the graded polynomial ring in the functionals
$\int x^\mu\,dt$, $\int \theta^\mu\,dt$ and $\int p_\mu\,dt$. In
particular, the axiom of Felder and Kazhdan is seen to hold.

In order to have a theory with local reparametrization invariance and
local supersymmetry, we may couple the spinning particle to
supergravity. Of course, the gravitational field in worldsheet
dimension $1$ has no dynamical content: but we will see that the
ghosts for the local supersymmetry of the theory considerably
complicate matters.

The supergravity multiplet consists of a pair of physical fields $e$
and $\psi$, of parity $\pa(e)=0$ and $\pa(\psi)=1$. These fields,
respectively a $1$-form and a function, may be identified with the
graviton and the gravitino of $D=1$ supergravity. The new action is
\begin{equation*}
  S_{[0]} = p_\mu \p x^\mu - \half \eta_{\mu\nu} \theta^\mu \p\theta^\nu -
  \half e \eta^{\mu\nu} p_\mu p_\nu + \psi p_\mu \theta^\mu .
\end{equation*}
The differential is now
\begin{align*}
  & & \s_{[0]}e^+ &= - \half \eta^{\mu\nu} p_\mu p_\nu
  & \s_{[0]}\psi^+ &= - p_\mu \theta^\mu \\
  \s_{[0]}x^+_\mu &= - \p p_\mu
  & \s_{[0]}\theta^+_\mu &= \eta_{\mu\nu} \p \theta^\nu + \psi p_\mu
  & \s_{[0]}p^{+\mu} &= \p x^\mu - e \eta^{\mu\nu} p_\nu + \psi \theta^\mu
\end{align*}
The variation $\s_{[0]}e^+$ is familiar as the stress-energy
tensor. The local gauge symmetries of this model correspond to
cohomology classes of $s_{[0]}$ at ghost number $-1$:
\begin{align*}
  \s_{[0]}\bigl( \p e^+ - \eta^{\mu\nu} x^+_\mu p_\nu \bigr) &= 0 &
  \s_{[0]}\bigl( \p \psi^+ + \eta^{\mu\nu} \theta^+_\mu p_\nu - x^+_\mu \theta^\mu
  + 2 e^+ \psi \bigr) &= 0 .
\end{align*}
These cohomology classes are killed by the introduction of ghosts $c$
and $\gamma$, with ghost-number $1$ and parity $\pa(c)=1$ and
$\pa(\gamma)=0$, and the corresponding antifields, and the addition to
the action of the term
\begin{equation*}
  S_{[1]} = \bigl( \p e^+ - \eta^{\mu\nu} x^+_\mu p_\nu \bigr) c
  + \bigl( \p \psi^+ + \eta^{\mu\nu} \theta^+_\mu p_\nu -
  x^+_\mu \theta^\mu + 2 e^+ \psi \bigr) \gamma .
\end{equation*}
This adds the following terms to the differential:
\begin{align*}
  \s_{[1]}c^+ &= \p e^+ - \eta^{\mu\nu} x^+_\mu p_\nu & \s_{[1]}\gamma^+ &=
  \p \psi^+ + \eta^{\mu\nu} \theta^+_\mu p_\nu - x^+_\mu \theta^\mu + 2
  e^+ \psi \\
  \s_{[1]}\psi^+ &= 2 e^+ \gamma & & \\
  \s_{[1]}\theta^+_\mu &= - x^+_\mu \gamma &
  \s_{[1]}p^{+\mu} &= - \eta^{\mu\nu} x^+_\nu c + \eta^{\mu\nu}
                    \theta^+_\nu \gamma \\
  \s_{[1]}x^\mu &= - \eta^{\mu\nu} p_\nu c - \theta^\mu \gamma &
  \s_{[1]}\theta^\mu &= - \eta^{\mu\nu} p_\nu \gamma \\
  \s_{[1]}e &= - \p c + 2 \psi \gamma & \s_{[1]}\psi &= \p\gamma
\end{align*}
The variation $\s_{[1]}c^+$ reflects the conservation of the
stress-energy tensor. The definition of the action is completed by the
addition of the term
\begin{equation*}
  S_{[2]} = - c^+ \gamma^2
\end{equation*}
yielding a solution $S=S_{[0]}+S_{[1]}+S_{[2]}$ of the classical
master equation
\begin{equation*}
  \textstyle
  \{ \int S\,dt , \int S\,dt \}=0 .
\end{equation*}
This adds the following terms to the differential, rendering it
nilpotent:
\begin{align*}
  \s_{[2]}\gamma^+ &= - 2 c^+ \gamma & \s_{[2]}c &= \gamma^2 .
\end{align*}
In this model, $\A$ is the graded polynomial algebra over $\CO$
generated by the remaining variables of the theory, namely
\begin{align*}
  & \{\p^\ell x^\mu\}_{\ell>0}
    \cup \{\p^\ell\theta,\p^\ell p_\mu,\p^\ell e,\p^\ell\psi,\p^\ell c,
    \p^\ell\gamma\}_{\ell\ge0}\\
  \mbox{} \cup
  & \{\p^\ell x^+_\mu,\p^\ell\theta^+_\mu,\p^\ell p^{+\mu},\p^\ell
    e^+,\p^\ell\psi^+,\p^\ell c^+,\p^\ell\gamma^+\}_{\ell\ge0} .
\end{align*}

As it happens, the action $S$ satisfies not just the classical master
equation but also the quantum master equation, since for each field
$\Phi$ of the model, we have
\begin{equation*}
  \frac{\p^2S}{\p\Phi\p\Phi^+} = 0 .
\end{equation*}
This is characteristic of quantum mechanics, where one does not meet
with anomalies in the course of quantization.

\section{General covariance of the spinning particle}

The Lie algebra of vector fields on the real line acts on the spinning
particle by reparametrization of the independent variable
$t$. Reflecting this, this Lie algebra acts on $\A$ by evolutionary
vector fields: the vector field $-\xi\p/\p t$ on the real line acts by
the evolutionary vector field
\begin{align*}
  T(\xi)\Phi
  &= \xi \left( \p x^\mu \frac{\p}{\p x^\mu} + \p\theta^\mu
    \frac{\p}{\p\theta^\mu} + \p p_\mu \frac{\p}{\p p_\mu} + \p
    x^+_\mu \frac{\p}{\p x^+_\mu} + \p\theta^+_\mu
    \frac{\p}{\p\theta^+_\mu} + \p p^{+\mu} \frac{\p}{\p p^{+\mu}}
    \right. \\
  &\quad \left. + \p e \frac{\p}{\p e} + \p\psi \frac{\p}{\p\psi} 
    + \p e^+ \frac{\p}{\p e^+} + \p\psi^+ \frac{\p}{\p\psi^+} + \p c
    \frac{\p}{\p c} + \p\gamma \frac{\p}{\p\gamma}
    + \p c^+ \frac{\p}{\p c^+} + \p\gamma^+ \frac{\p}{\p\gamma^+} \right) \\
  &- \p\xi \left( e \frac{\p}{\p e} + \psi \frac{\p}{\p\psi}
    + x^+_\mu \frac{\p}{\p x^+_\mu} + \theta^+_\mu
    \frac{\p}{\p\theta^+_\mu} + p^{+\mu} \frac{\p}{\p p^{+\mu}} + \p
    c^+ \frac{\p}{\p c^+} + \p\gamma^+ \frac{\p}{\p\gamma^+} \right) .
\end{align*}
(We may think of $\xi$ either as a function of the independent
variable $t$, in which case $\p\xi=\xi'$, or, more in the spirit of
this article, as an auxiliary bosonic field.) In particular,
$T(1)=\p$.

Consider the expression
\begin{equation*}
  G = x^+_\mu p^{+\mu} - \half \eta^{\mu\nu}\theta^+_\mu\theta^+_\nu +
  c^+e + \gamma^+\psi .
\end{equation*}
The general covariance of the spinning particle is expressed through
the formula
\begin{equation*}
  \{ \{S,\xi G\} , f \} = T(\xi)f ,
\end{equation*}
which follows from the identity
\begin{equation*}
  \{S,\xi G\} = - \xi \bigl( x^+_\mu \p x^\mu + \theta^+_\mu
  \p\theta^\mu + p^{+\mu} \p p_\mu - \p e^+ e + c^+ \p c - \p\psi^+
  \psi + \gamma^+ \p \gamma \bigr) .
\end{equation*}
As a special case, we have
\begin{equation*}
  \{ \{S, G\} , f \} = \p f .
\end{equation*}
This corresponds to the commutator
\begin{equation}
  \label{homotopy}
  [ \s , \g ] = \p
\end{equation}
among evolutionary vector fields, where $\g$ is the vector field
$\g(f)=\{G,f\}$ associated to $G$:
\begin{equation}
  \label{topological}
  \g = p^{+\mu} \frac{\p}{\p x^\mu} - x_\mu^+ \frac{\p}{\p p_\mu} +
  \eta^{\mu\nu} \theta^+_\mu \frac{\p}{\p\theta^\mu} + c^+
  \frac{\p}{\p e^+} - e \frac{\p}{\p c} + \gamma^+ \frac{\p}{\p\psi^+}
  + \psi \frac{\p}{\p\gamma} .
\end{equation}
We see that the total derivative $\p$ acts trivially on the cohomology
$H^*(\A,\s)$.

\section{The Batalin-Vilkovisky cohomology of the spinning particle:
  $d=0$}

In this section, we calculate the BV cohomology of the spinning
particle when the matter fields $x^\mu,\theta^\mu,p_\mu$ are absent,
that is, when the dimension $d$ of the target space $\RR^d$ is
zero. We will see that the cohomology groups are nontrivial in all
negative degrees.

Consider the following elements of the localization $\A_\gamma$ of the
complex $\A$ obtained by inversion of $\gamma$:
\begin{align*}
  A_k &= (\psi^+)^{k+1} c \gamma^{-1} \in \A_\gamma^{-k-1} , & & k\ge-1 , \\
  B_k &= \frac{1}{k} (\psi^+)^k \gamma^{-1} \in \A_\gamma^{-k-1} , & & k\ge1 .
\end{align*}
The BV differentials of these expressions are in $\A$ itself:
\begin{align*}
  \alpha_k &= 2(k+1) (\psi^+)^k e^+ c + (\psi^+)^{k+1} \gamma \in
          \A^{-k} , & & k\ge-1 , \\
  \beta_k &= (\psi^+)^{k-1} e^+ \in \A^{-k} , & & k\ge1 .  
\end{align*}
It follows that $\alpha_k$ and $\beta_k$ are cocycles: it is not
difficult to see that they are not coboundaries in $\A$ itself.

\begin{theorem}
  \label{nomatter}
  The cohomology group $H^{-k}(\A,\s)$ is spanned by the cocycles
  $\alpha_k$ and $\beta_k$, and the cocycle $1$ in degree $0$.
\end{theorem}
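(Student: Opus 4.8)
The plan is to strip away all the $t$-derivatives, reducing to a finite-dimensional complex on the four undifferentiated variables $c,\gamma,e^+,\psi^+$, and then to finish with a small Koszul computation. The conceptual justification for discarding derivatives is the homotopy $[\s,\g]=\p$ of \eqref{homotopy}, which shows that $\p$ acts as zero on $H^*(\A,\s)$; concretely I would realize this through a spectral sequence. Write $\s=\s_{\mathrm{lin}}+\s_{\mathrm{quad}}$, where $\s_{\mathrm{lin}}$ is the part linear in the generators, namely $\s_{\mathrm{lin}}e=-\p c$, $\s_{\mathrm{lin}}\psi=\p\gamma$, $\s_{\mathrm{lin}}c^+=\p e^+$, $\s_{\mathrm{lin}}\gamma^+=\p\psi^+$ and $\s_{\mathrm{lin}}c=\s_{\mathrm{lin}}\gamma=\s_{\mathrm{lin}}e^+=\s_{\mathrm{lin}}\psi^+=0$, while $\s_{\mathrm{quad}}$ collects the quadratic terms $\s c=\gamma^2$, $\s\psi^+=2e^+\gamma$ together with the corrections $2\psi\gamma$ to $\s e$ and $2e^+\psi-2c^+\gamma$ to $\s\gamma^+$. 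Filtering $\A$ by total polynomial degree in the generators, $\s_{\mathrm{lin}}$ preserves the degree while $\s_{\mathrm{quad}}$ raises it by one, so the $E_0$-differential is $\s_{\mathrm{lin}}$. Since $\s_{\mathrm{lin}}$ pairs each generator with the derivative of a partner, $\s_{\mathrm{lin}}(\p^\ell e)=-\p^{\ell+1}c$ and similarly for $(\psi,\gamma)$, $(c^+,e^+)$ and $(\gamma^+,\psi^+)$, each such pair forms an acyclic two-term jet complex, and $E_1=H^*(\A,\s_{\mathrm{lin}})$ is the free graded-commutative algebra $R=\CC[\gamma,\psi^+]\o\Lambda[c,e^+]$ generated by the four undifferentiated survivors.

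The induced differential $d_1=\bar\s$ on $R$ is read off from $\s_{\mathrm{quad}}$: it is the derivation with $\bar\s c=\gamma^2$, $\bar\s\psi^+=2e^+\gamma$ and $\bar\s\gamma=\bar\s e^+=0$ (the corrections to $\s e$ and $\s\gamma^+$ drop out, as $e$ and $\gamma^+$ vanish in $R$). I would then compute $H^*(R,\bar\s)$ directly: writing a general element as $P_{00}+P_{10}c+P_{01}e^++P_{11}ce^+$ with $P_{ij}\in\CC[\gamma,\psi^+]$ and solving the cocycle and coboundary conditions, the cohomology comes out two-dimensional in each degree $-k$ for $k\ge1$, spanned by the derivative-free classes $\alpha_k$ and $\beta_k$ of the statement; in degree $0$ it is spanned by $1$ and $\alpha_0$, in degree $1$ by $\alpha_{-1}=\gamma$, with nothing in higher degree. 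This is exactly the list in the theorem. Because $\alpha_k$ and $\beta_k$ sit in polynomial degrees $k+2$ and $k$, whereas a higher differential $d_r$ ($r\ge2$) would raise the polynomial degree by $r$ while raising the ghost number by $1$, no $d_r$ can connect two of these classes; hence the spectral sequence degenerates at $E_2$.

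The delicate point---and the step I expect to be the main obstacle---is convergence: the polynomial-degree filtration is not complete, since at fixed ghost number the degree is unbounded, so degeneration does not by itself identify $E_\infty$ with $H^*(\A,\s)$, nor does it exclude coboundary relations among the $\alpha_k$ and $\beta_k$. I would settle this using the localization $\A_\gamma$. The element $c\gamma^{-2}$ satisfies $\s(c\gamma^{-2})=1$, so multiplication by it is a genuine contracting homotopy and $\A_\gamma$ is acyclic; the short exact sequence $0\to\A\to\A_\gamma\to\A_\gamma/\A\to0$ then gives honest isomorphisms $H^n(\A,\s)\cong H^{n-1}(\A_\gamma/\A,\s)$, under which the connecting map carries the classes of $A_k$, $B_k$ and $c\gamma^{-2}$ precisely to $\alpha_k$, $\beta_k$ and $1$. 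This exhibits the listed classes as images of nonzero primitives that cannot be pushed into $\A$ because of their poles in $\gamma$, confirming that they are not coboundaries, and it furnishes the convergence-free computation needed to pin down the degenerate spectral sequence. The remaining work is the bookkeeping of pole orders in $\A_\gamma/\A$ and the verification that it produces no classes beyond those already found.
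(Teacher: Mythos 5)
Your main line is the same as the paper's: filter $\A$ so that the $E_0$-differential is the linear Koszul part pairing $c^+,\gamma^+,e,\psi$ with $\p e^+,\p\psi^+,\p c,\p\gamma$, identify $E_1$ with the free graded-commutative algebra on the four undifferentiated survivors $e^+,\psi^+,c,\gamma$, and compute the cohomology of the induced differential $\s_1\psi^+=2e^+\gamma$, $\s_1c=\gamma^2$ by writing out a general cochain; your answer in every degree agrees with the paper's, and your degree count ($\alpha_k$ in polynomial degree $k+2$, $\beta_k$ in degree $k$) correctly forces $E_2=E_\infty$. Where you go beyond the paper is the convergence discussion, and your instinct is right that this is the delicate point: the decreasing filtration by polynomial degree is exhaustive and has $F^0=\A$, which is enough to conclude that a class nonzero in $E_\infty$ is nonzero in $H^*(\A,\s)$ (so independence of the $\alpha_k,\beta_k$ is secure), but it is neither bounded above nor complete at fixed ghost number, so the \emph{spanning} statement does not follow from degeneration alone. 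The paper does not address this either. Your proposed repair, however, is not yet a proof: the identity $\s(c\gamma^{-2})=1$ does make $\A_\gamma$ acyclic and does give the connecting isomorphism $H^{n-1}(\A_\gamma/\A,\s)\cong H^n(\A,\s)$ carrying $[A_k],[B_k]$ to $[\alpha_k],[\beta_k]$, but this merely transports the problem: the assertion that $A_k$ and $B_k$ represent nonzero classes in $H^*(\A_\gamma/\A,\s)$ is \emph{equivalent} to $\alpha_k,\beta_k$ not being coboundaries in $\A$ (having a pole in $\gamma$ does not prevent an element of $\A_\gamma/\A$ from being a cocycle's coboundary there), and the "bookkeeping of pole orders" you defer is a full computation of $H^*(\A_\gamma/\A,\s)$ --- a complex that still contains all the jet variables and is not visibly simpler than $\A$ itself. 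So the proposal reproduces the paper's argument and correctly isolates its weak point, but the promised convergence-free confirmation is not actually supplied; to close it you would need either to carry out the $\A_\gamma/\A$ computation, or to produce an explicit contraction of $(\A,\s)$ onto the subcomplex generated by $e^+,\psi^+,c,\gamma$ (which is a genuine subcomplex here, since $\s$ preserves it when $d=0$).
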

\begin{proof}
  The differential $\s$ is a quadratic perturbation of a Koszul
  differential, and its cohomology may be calculated by a spectral
  sequence with $E_0$ equal to $\A$ with the differential $\s_0$
  obtained by retaining only linear terms in the formula for $s$:
  \begin{align*}
    \s_0 c^+ &= \p e^+ & \s_0 \gamma^+ &= \p \psi^+ &
    \s_0 e &= - \p c & \s_0 \psi &= \p\gamma .
  \end{align*}
  The cohomology $E_1$ of the differential $\s_0$ is a graded
  polynomial ring in generators $\E^+=\int e^+\,dt$ and $\Psi^+=\int
  \psi^+\,dt$, in degree $-1$, and $\C=\int c\,dt$ and
  $\Gamma=\int\gamma\,dt$, in degree $1$.

  The differential $\s_1$ on $E_1$ is given by the formulas
  \begin{align*}
    \s_1\Psi^+ &= 2 \E^+ \Gamma & \s_1\C &= \Gamma^2 .
  \end{align*}
  A cochain $z$ of degree $-k$ for $k>0$ has the following general
  form:
  \begin{multline*}
    z = u_k (\Psi^+)^k + v_k (\Psi^+)^{k-1} \E^+ \\
    + \sum_{j=1}^\infty \Bigl( (\Psi^+)^{k+j} ( u_{k+j} \Gamma^j +
    U_{k+j} \C \Gamma^{j-1} ) + (\Psi^+)^{k+j-1} \E^+ ( v_{k+j}
    \Gamma^j + V_{k+j} \C \Gamma^{j-1} ) \Bigr) .
  \end{multline*}
  Setting $\s_1z=0$, we see that $U_{k+j}=V_{k+j}-2(k+j)u_{k+j}=0$,
  and that
  \begin{equation*}
    z = u_{k+1}\alpha_k + v_k\beta_k + s_1 \Biggl(
    \sum_{j=2}^\infty u_{k+j} (\Psi^+)^{k+j} \C \Gamma^{j-2} +
    \sum_{j=1}^\infty \frac{v_{k+j}}{2(k+j)} (\Psi^+)^{k+j} \Gamma^{j-1}
    \Biggr) .
  \end{equation*}
  This shows that if $k>0$, $H^{-k}(\A,\s)$ is spanned by $\alpha_k$
  and $\beta_k$.

  A cochain $z$ of degree $0$ has the following general form:
  \begin{equation*}
    z = u_0 + \sum_{j=1}^\infty \Bigl( (\Psi^+)^j ( u_j \Gamma^j + U_j
    \C \Gamma^{j-1} ) + (\Psi^+)^{j-1} \E^+ ( v_j \Gamma^j + V_j
    \C \Gamma^{j-1} )\Bigr) .
  \end{equation*}
  Setting $\s_1z=0$, we see that
  \begin{equation*}
    z = u_0 + u_1 \alpha_0 + s_1 \Biggl( - \sum_{j=2}^\infty u_j
    (\Psi^+)^k \C \Gamma^{j-2} + \sum_{j=1}^\infty \frac{v_j}{2k} 
    (\Psi^+)^k \Gamma^{j-1} \Biggr) .
  \end{equation*}
  This shows that $H^0(\A,\s)$ is spanned by $1$ and
  $\alpha_0$. Similar calculations show that $H^1(\A,\s)$ is spanned
  by $\alpha_{-1}$, and that $H^\ell(\A,\s)$ vanishes for $\ell>1$.
\end{proof}

Let $\g$ be the vector field \eqref{topological}, which in the
present setting equals
\begin{equation*}
  \g = c^+ \frac{\p}{\p e^+} - e \frac{\p}{\p c} + \gamma^+
  \frac{\p}{\p\psi^+} + \psi \frac{\p}{\p\gamma}
\end{equation*}
and consider the transgressions of the cocycles $\alpha_k$ and $\beta_k$:
\begin{align*}
  \tilde{\alpha}_k &= \g(\alpha_{k-1}) \in \A^{-k}, & & k\ge0 , \\
  \tilde{\beta}_k &= \g(\beta_{k-1}) \in \A^{-k} , & & k\ge2 .
\end{align*}

\begin{corollary}
  The cohomology group $H^{-k}(\F,\s)$ is spanned by the cocycles
  $\int \alpha_k \,dt$, $\int \beta_k \,dt$, $\int \tilde\alpha_k\,dt$
  and $\int \tilde\beta_k\,dt$, and the cocycle $\int 1\,dt$ in degree
  $0$.
\end{corollary}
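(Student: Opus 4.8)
The plan is to read off $H^*(\F,\s)$ from the long exact sequence \eqref{long} by showing that every connecting homomorphism in it vanishes. These connecting maps are precisely the arrows labelled $\p$, namely $\p\colon H^k(\A,\s)\to H^k(\A,\s)$ for $k\neq0$ and $\p\colon H^0(\A/\C,\s)\to H^0(\A,\s)$ for $k=0$. By the homotopy formula \eqref{homotopy} we have $\p=[\s,\g]=\s\g+\g\s$ as operators on $\A$, so $\p$ is null-homotopic on the complex $(\A,\s)$ and hence induces the zero map on $H^*(\A,\s)$; for the corner $k=0$ the same computation applied to a cocycle $g\in\A^0$ gives $\p g=\s\g(g)$, which is $\s$-exact in $\A$, so that arrow vanishes too. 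With all connecting maps zero, \eqref{long} decomposes into short exact sequences
\begin{equation*}
  0 \to H^{-k}(\A,\s) \to H^{-k}(\F,\s) \to H^{-k+1}(\tilde{\A},\s) \to 0 .
\end{equation*}

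Next I would identify the two outer terms using the preceding theorem. The left-hand map is the passage $f\mapsto\int f\,dt$, so it contributes the classes $\int\alpha_k\,dt$ and $\int\beta_k\,dt$. Since $\tilde\A^{j}=\A^{j}$ for $j\neq0$, the right-hand term $H^{-k+1}(\tilde\A,\s)$ equals $H^{-k+1}(\A,\s)$, spanned by $\alpha_{k-1}$ and $\beta_{k-1}$, and here the transgressions supply the lifts: from \eqref{homotopy} and $\s\alpha_{k-1}=0$ one has $\s\tilde\alpha_k=\s\g(\alpha_{k-1})=\p\alpha_{k-1}$, so that $\int\tilde\alpha_k\,dt$ is a cocycle in $\F$ whose $\s$-image is a total derivative and whose image under the connecting map is $[\alpha_{k-1}]$; likewise $\int\tilde\beta_k\,dt$ lifts $[\beta_{k-1}]$. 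Thus $\int\tilde\alpha_k\,dt$ and $\int\tilde\beta_k\,dt$ span a complement to the image of $H^{-k}(\A,\s)$, and together with $\int\alpha_k\,dt$ and $\int\beta_k\,dt$ they span $H^{-k}(\F,\s)$; for the interior degrees $k\ge2$ all four generators occur.

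What remains is bookkeeping at the low-degree corner, which I expect to be the only delicate point. At degree $0$ the quotient complex uses $\tilde\A^0=\A^0/\C$ in place of $\A^0$, so the constant $1\in H^0(\A,\s)$ descends to the genuine class $\int 1\,dt$, while the transgression term is governed by $H^1(\A,\s)=\langle\alpha_{-1}\rangle$, yielding the generators $\int\alpha_0\,dt$, $\int\tilde\alpha_0\,dt$ and $\int 1\,dt$; at degree $-1$ the right-hand term is $H^0(\A/\C,\s)=\langle\alpha_0\rangle$, yielding $\int\alpha_1\,dt$, $\int\beta_1\,dt$ and $\int\tilde\alpha_1\,dt$. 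These match the stated list under the conventions that $\alpha_k$ is defined for $k\ge-1$, $\beta_k$ for $k\ge1$, $\tilde\alpha_k$ for $k\ge0$ and $\tilde\beta_k$ for $k\ge2$, so that $\beta_0$, $\tilde\beta_0$ and $\tilde\beta_1$ simply do not occur. The main obstacle is therefore not the interior computation, which is immediate once the sequence splits, but confirming the vanishing of the connecting map at the degree-$0$ corner and correctly tracking the constant through the identification $\tilde\A^0=\A^0/\C$.
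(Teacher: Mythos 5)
Your proposal is correct and follows the same route as the paper: the paper's proof likewise invokes the long exact sequence \eqref{long} and the homotopy \eqref{homotopy} to kill the connecting maps $\p$ on $H^*(\A,\s)$, identifying the transgressions $\tilde\alpha_k=\g(\alpha_{k-1})$, $\tilde\beta_k=\g(\beta_{k-1})$ as the lifts of the right-hand terms. Your extra bookkeeping at the degree-$0$ and degree-$-1$ corners (the replacement $\tilde\A^0=\A^0/\C$ and the indexing conventions for which generators occur) is consistent with the paper, which leaves those details implicit.
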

\begin{proof}
  We consider the long exact sequence \eqref{long}. By
  \eqref{homotopy}, the morphisms
  \begin{equation*}
    \p : H^{-k}(\A,\s) \to H^{-k}(\A,\s)
  \end{equation*}
  vanish. This implies that the classes $\int\alpha_k\,dt$ and
  $\int\beta_k\,dt$ and their transgressions
  $\int\tilde{\alpha}{}_k\,dt$ and $\int\tilde{\beta}{}_k\,dt$ span
  $H^{-k}(\F,\s)$, together with $\int 1\,dt$ in degree $0$.
\end{proof}

It turns out that the graded Lie bracket on $H^*(\F,\s)$ is nontrivial.
\begin{theorem}
  We have the following nonzero graded Lie brackets in $H^*(\F,\s)$:
  \begin{align*}
    \{ \tilde\alpha_k , \alpha_\ell \} &= (\ell-k) \alpha_{k+\ell} &
    \{ \tilde\alpha_k , \tilde\alpha_\ell \} &= (\ell-k) \tilde\alpha_{k+\ell} \\
    \{ \tilde\alpha_k , \beta_\ell \} &= (2k+\ell+1) \beta_{k+\ell} &
    \{ \tilde\alpha_k , \tilde\beta_\ell \} &= (2k+\ell+1) \tilde\beta_{k+\ell}
  \end{align*}
  All other brackets are zero.
\end{theorem}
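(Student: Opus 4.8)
The plan is to evaluate each bracket on the explicit current representatives, using the functional antibracket, and then to reduce the resulting current modulo total derivatives and $\s$-exact terms so as to read off its class in the basis of $H^*(\F,\s)$ provided by the Corollary. I would let the expected algebra guide the bookkeeping: the relations $\{\tilde\alpha_k,\tilde\alpha_\ell\}=(\ell-k)\tilde\alpha_{k+\ell}$ are those of the Witt algebra of polynomial vector fields, and the other three families assert that $\{\alpha_\ell\}$, $\{\beta_\ell\}$ and $\{\tilde\beta_\ell\}$ are tensor-density modules over it, with weights encoded in the coefficients $(\ell-k)$ and $(2k+\ell+1)$. This matches the heuristic of the introduction: after inverting the superghost, the classes $\int\tilde\alpha_k\,dt$ should appear as the generators of worldline reparametrisations dressed by the superghost sector, i.e. a copy of the Witt algebra acting inside $H^*(\F,\s)$, and recognising this fixes both which brackets can be nonzero and their structure constants.

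First I would dispose of the brackets that vanish identically at the current level. Each term of the antibracket pairs a field-derivative $\delta/\delta\Phi_i$ with the conjugate antifield-derivative $\delta/\delta\Phi_i^+$, so a bracket vanishes term-by-term whenever, for every conjugate pair, one of the two representatives fails to depend on one member of the pair. The representatives $\alpha_k,\beta_k$ involve only the fields $c,\gamma$ and the antifields $e^+,\psi^+$, never a complete pair, so $\{\alpha_k,\alpha_\ell\}$, $\{\alpha_k,\beta_\ell\}$ and $\{\beta_k,\beta_\ell\}$ vanish outright. Because $\g$ carries $e^+,\psi^+$ to the antifields $c^+,\gamma^+$, the transgressed class $\tilde\beta_k$ is a function of antifields alone, and since the antibracket always needs one field-derivative, $\{\beta_k,\tilde\beta_\ell\}$ and $\{\tilde\beta_k,\tilde\beta_\ell\}$ vanish as well. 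The brackets that survive this argument are precisely those with a $\tilde\alpha$ in one slot, whose transgression reintroduces the conjugate variables $e,\psi,c^+,\gamma^+$, together with the coupling of the fields in $\alpha$ to the antifields in $\tilde\beta$; these I would evaluate directly.

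For the nonzero brackets the natural starting point is the lowest transgression $\g\alpha_{-1}=\psi$, whose antibracket collapses to $\{\int\psi\,dt,\int g\,dt\}=\int(\delta g/\delta\psi^+)\,dt$. Evaluating this on $\alpha_\ell,\beta_\ell,\tilde\alpha_\ell,\tilde\beta_\ell$ at once produces multiples of the expected classes and fixes the module weights, giving the base case for an induction on the index of $\tilde\alpha$. For the action of a general $\tilde\alpha_k$ I would exploit that $\g=\{G,-\}$ is the reparametrisation generator, via $\{\{S,\xi G\},f\}=T(\xi)f$ and the homotopy $[\s,\g]=\p$: this lets me rewrite the bracket with $\tilde\alpha_k$ as a worldline Lie derivative weighted by a power of $\psi^+$, whose effect on the monomials $(\psi^+)^m$ is to shift $m$ with a weight-dependent coefficient, reproducing $(\ell-k)$ on the $\alpha$'s and $(2k+\ell+1)$ on the $\beta$'s. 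The coefficients can then be pinned down by the base case together with the graded Jacobi identity, which forces the module structure once a single weight is known.

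The main obstacle is the double-transgression brackets $\{\tilde\alpha_k,\tilde\alpha_\ell\}$ and $\{\tilde\alpha_k,\tilde\beta_\ell\}$. Here both arguments are the multi-term currents produced by $\g$, every conjugate pair contributes, and the raw antibracket is only cohomologous---not equal---to the advertised multiple of $\tilde\alpha_{k+\ell}$ or $\tilde\beta_{k+\ell}$; extracting the coefficient demands careful integration by parts to isolate the $\p$-exact and $\s$-exact remainders, and this is exactly where the loss of the Leibniz rule for the functional bracket makes the calculation opaque. I would avoid the brunt of it by deducing these brackets indirectly: graded Jacobi relates $\{\tilde\alpha_k,\tilde\alpha_\ell\}$ to the iterated actions $\{\tilde\alpha_k,\{\tilde\alpha_\ell,\alpha_m\}\}$ and $\{\tilde\alpha_k,\{\tilde\alpha_\ell,\beta_m\}\}$ on the module classes, whose values are already fixed in the previous step, so the Witt relation can be read off from how the bracket is forced to act; the same device determines $\{\tilde\alpha_k,\tilde\beta_\ell\}$ from its action, reducing the whole computation to the module brackets evaluated directly on the representatives.
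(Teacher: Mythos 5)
Your handling of the vanishing brackets is sound and in fact more explicit than the paper's one-line dismissal: the conjugate-pair analysis correctly shows that $\{\alpha_k,\alpha_\ell\}$, $\{\alpha_k,\beta_\ell\}$, $\{\beta_k,\beta_\ell\}$, $\{\beta_k,\tilde\beta_\ell\}$ and $\{\tilde\beta_k,\tilde\beta_\ell\}$ vanish term by term, and you rightly isolate $\{\alpha_k,\tilde\beta_\ell\}$ as the one vanishing claim that needs an actual computation (both the $c$--$c^+$ and the $\gamma$--$\gamma^+$ pairings contribute there). Your base-case computation $\{\int\psi\,dt,\int\alpha_\ell\,dt\}=\pm(\ell+1)\alpha_{\ell-1}$ is also correct, but note that it matches the theorem's structure constants only under the indexing $\psi=\g(\alpha_{-1})=\tilde\alpha_{-1}$, i.e.\ $\tilde\alpha_k=\g(\alpha_k)$, which is the convention the paper's proof actually operates with (it represents $\tilde\alpha_k$ as $-\s\g(A_k)$ modulo $\p$); you must fix this indexing before anchoring an induction to the base case, or the degrees do not close.

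The genuine gaps are in the nonzero brackets. For general $k$ the paper does not induct: it uses $[\s,\g]=\p$ together with $\{A_k,\alpha_\ell\}=0$ to write $\{\tilde\alpha_k,\alpha_\ell\}=-\s\{\g(A_k),\alpha_\ell\}$, and then computes $\{\g(A_k),\alpha_\ell\}=(k-\ell)A_{k+\ell}$ directly in the localization $\A_\gamma$. Your replacement --- ``Jacobi forces the module structure once a single weight is known'' --- is circular: it presupposes the Witt relations among the $\tilde\alpha_k$, which are themselves part of what is to be proven, and it also presupposes that $\{\tilde\alpha_k,\alpha_\ell\}$ is a pure multiple of $\alpha_{k+\ell}$ with no component along the parity-allowed class $\tilde\beta_{k+\ell}$, which requires an argument. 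The more serious failure is the double-transgression step. Deducing $\{\tilde\alpha_k,\tilde\alpha_\ell\}$ from its adjoint action via Jacobi cannot succeed, because the adjoint representation on the $\alpha$- and $\beta$-modules is not faithful: by the theorem itself $\alpha_j$ and $\beta_j$ are central, so a component of $\{\tilde\alpha_k,\tilde\alpha_\ell\}$ along $\beta_{k+\ell}$ (allowed by parity and by antisymmetry in $k\leftrightarrow\ell$) acts trivially on everything you propose to test against, and similarly an $\alpha_{k+\ell}$-component of $\{\tilde\alpha_k,\tilde\beta_\ell\}$ is invisible. The paper closes this with a different device: $\g=\{G,-\}$ is an inner derivation of the antibracket by the Jacobi identity for $G$, and $\g^2=0$ in the $d=0$ sector, so applying $\g$ to the established identity $\{\tilde\alpha_k,\alpha_\ell\}=(\ell-k)\alpha_{k+\ell}$ annihilates the first slot and transgresses the second slot and the right-hand side simultaneously, giving $\{\tilde\alpha_k,\tilde\alpha_\ell\}=(\ell-k)\tilde\alpha_{k+\ell}$ as an identity of classes rather than of adjoint actions. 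You need this, or an equally direct computation, to exclude central contamination.
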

\begin{proof}
  By \eqref{homotopy}, we see that
  \begin{equation*}
    \{ \tilde\alpha_k , \alpha_\ell \} = - \s \{ \g(A_k) , \alpha_\ell
    \} + \p \{ A_k , \alpha_\ell \} = - \s \{ \g(A_k) , \alpha_\ell \} ,
  \end{equation*}
  since $\{ A_k , \alpha_\ell \}=0$. We have
  \begin{align*}
    \{ \g(A_k) , \alpha_\ell \}
    &= \mbox{} \{ (k+1) (\psi^+)^k \gamma^+ c \gamma^{-1} -
      (\psi^+)^{k+1} e \gamma^{-1} + (\psi^+)^{k+1} c \psi \gamma^{-2}
      , \alpha_\ell \} \\
    &= (k+1) A_{k+\ell+1} - 2 (\ell+1) A_{k+\ell+1} + (\ell+1)
      A_{k+\ell+1} \\
    &= (k-\ell) A_{k+\ell} ,
  \end{align*}
  and the formula for $\{\tilde\alpha_k,\alpha_\ell\}$
  follows. Applying the vector field $\g$ to both sides of the formula
  for $\{\tilde\alpha_k,\alpha_\ell\}$, we obtain the formula for
  $\{\tilde\alpha_k,\tilde\alpha_\ell\}$.

  Similarly,
  $\{ \tilde\alpha_k , \beta_\ell \} = - \s \{ \g A_k , \beta_\ell
  \}$. Since
  \begin{align*}
    \{ \g(A_k) , \beta_\ell \}
    &= \mbox{} \{ (k+1) (\psi^+)^k \gamma^+ c \gamma^{-1} -
      (\psi^+)^{k+1} e \gamma^{-1} + (\psi^+)^{k+1} c \psi \gamma^{-2}
      , \beta_\ell \} \\
    &= - (\psi^+)^{k+\ell} \gamma^{-1} - (\ell-1)
      (\psi^+)^{k+\ell-1}e^+c\gamma^{-2} ,
  \end{align*}
  the formula for $\{\tilde\alpha_k,\beta_\ell\}$ follows. Finally,
  applying the vector field $\g$ to both sides of the formula for
  $\{\tilde\alpha_k,\beta_\ell\}$, we obtain the formula for
  $\{\tilde\alpha_k,\tilde\beta_\ell\}$. It is easily seen from the
  explicit formulas that all of the remaining brackets vanish.
\end{proof}

\section{The Batalin-Vilkovisky cohomology of the spinning particle:
  $d>0$}

The cohomology classes $\alpha_k$ and $\beta_k$ which we constructed
in the previous section have generalizations for the spinning particle
in a positive-dimensional target, where $d>0$. Here, we discuss the
case of a flat target: we generalize our results to targets with
nontrivial (pseudo-)Riemannian metric, and background magnetic field,
in a forthcoming paper.

Although the formulas for the cohomology classes of negative degree
are quite complicated, they all involve an analogue of the volume form
for $\RR^d$ constructed from the field $\theta^\mu$:
\begin{equation*}
  \Om = \theta^1 \dots \theta^d .
\end{equation*}
Since $(\Om)^2=0$ when $d>0$, it turns out that the Lie bracket on the
cohomology $H^*(\F,\s)$ is simpler in the general case.

Let $\iota_\mu=\p/\p\theta^\mu$. If $v$ is a vector with components
$v_\mu$, define
\begin{equation*}
  \iota(v) = \eta^{\mu\nu} v_\mu \iota_\nu
\end{equation*}
In particular, $[s,\iota(v)]=\iota(sv)$. Note that
$\s\Om=\iota(p)\Om\gamma$. If $f\in\CO$, denote by $\grad f$ the
vector with components $\p f/\p x^\mu$.

Given a function $f$ of the coordinates $x^\mu$ and $k\ge0$, consider
the following elements of $\A_\gamma^{-k-1}$:
\begin{align*}
  A_k(f) &= (\psi^+)^{k+1} c f \Om \gamma^{-1} , \\
  Z_k(f) &= (k+1) (\psi^+)^k f \Om \gamma^{-1} +
           (\psi^+)^{k+1} c \iota(\grad f) \Om \gamma^{-1} .
\end{align*}
The BV differentials of these expressions are in $\A^{-k}$:
\begin{align*}
  \alpha_k(f) &= \s(A_k(f)) , & \zeta_k(f) &= \s(Z_k(f)) .
\end{align*}
It follows that they are are cocycles in $\A$: we will see that they
represent nontrivial cohomology classes.

There are also cocycles in degrees $0$ and $1$ which have no analogue
in the case where $d=0$. Let $\R$ be the quotient of the differential
graded superalgebra $\A$ by the differential ideal generated by the
fields
\begin{equation*}
  \{e,\psi,c\} \cup \{
  x^+_\mu,\theta^+_\mu,p^{+\mu},e^+,\chi^+,c^+,\gamma^+\}
\end{equation*}
Denote by $\P_\mu$, $\Theta^\mu$, $\X^\mu$ and $\Gamma$ the zero-modes
$\int p_\mu\,dt$, $\int\theta^\mu\,dt$, $\int x^\mu\,dt$ and
$\int\gamma\,dt$ respectively. Then $\R$ is the graded superalgebra
\begin{equation*}
  \CO[\Theta^\mu,\P_\mu,\Gamma] /
  \bigl(\P_\mu\Theta^\mu,\zeta^{\mu\nu}\P_\mu\P_\nu,\Gamma^2 \bigr)
\end{equation*}
with differential $\Q\Gamma$, where $\Q$ is the differential operator
\begin{equation*}
  \Q = \eta^{\mu\nu} \P_\mu \frac{\p}{\p\Theta^\nu} + \Theta^\mu
  \frac{\p}{\p\X^\mu} .
\end{equation*}
We may think of $\R$ as the ring of functions whose spectrum is a
supersymmetric analogue of the light-cone. We will denote
$\Theta^1\dots\Theta^d\in\R^0$ by the same letter $\Om$ as in $\A^0$.

There is an embedding $\xi$ of $H^*(\R)$ into $H^*(\A,\s)$, defined by
mapping a function $u$ of the variables $\{\X^\mu,\Theta^\mu,\P_\mu\}$
to the corresponding function $\xi^0(u)$ of the variables
$\{x^\mu,\theta^\mu,p_\mu\}$ in $\A$. If $\Q u=0$, it follows that
$\xi^0(u)$ is a cocycle.

Since $(P_\mu\Theta^\mu)\xi^0(u)=-\s(\psi^+\xi^0(u))$ and
$(\eta^{\mu\nu} P_\mu P_\nu)\xi^0(u)=-2\s(e^+\xi^0(u))$ are
coboundaries, we see that $\xi^0$ induces a map from $H^0(\R)$ to
$H^0(\A,\s)$. Note that $\xi^0(\iota(P)\Om)=-\zeta_0(1)$.

We may also map a function $v$ of the variables
$\{\X^\mu,\theta^\mu,\P_\mu\}$ to the $1$-cocycle
\begin{equation*}
  \xi^1(v) = \gamma v + c \Q v .
\end{equation*}
By the formulas
\begin{equation*}
  \xi^1(\Q v) = - \s(\xi^1(v)) ,
\end{equation*}
and
\begin{align*}
  \xi^1\bigl( \P_\mu\Theta^\mu v \bigr)
  &= \s \bigl( 2e^+c \xi^0(v) - \psi^+ \xi^1(v) \bigr) &
  \xi^1\bigl( \eta^{\mu\nu}\P_\mu \P_\nu v \bigr)
  &= - 2 \, s \bigl( e^+ \xi^1(v) \bigr) ,
\end{align*}
we see that $\xi^1$ induces a map from $H^1(\R)$ to $H^1(\A,\s)$.

\begin{theorem}
  \begin{equation*}
    H^{-k}(\A,\s) =
    \begin{cases}
      \Bigl\{ \int \bigl( \alpha_k(f) + \zeta_k(g) \bigr) \,dt \,
      \Big| \, f,g \in\CO \Bigr\} & k>0 , \\[10pt]
      \Bigl\{ \int \bigl( \xi^0(u) + \alpha_0(f) + \zeta_0(g) \bigr)
      \,dt\,\Big|\, u\in H^0(\R) , f \in \CO , g \in\CO/\CC \Bigr\} &
      k=0 , \\[10pt]
      \Bigl\{\int\xi^1(v)\,dt\,\Big|\, v\in H^1(\R) \Bigr\} & k=-1 ,
      \\[10pt]
      0 & k<-1 .
    \end{cases}
  \end{equation*}
\end{theorem}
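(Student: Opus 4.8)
The plan is to compute $H^*(\A,\s)$ by the same spectral sequence used for $d=0$, refined by the matter fields. Since every component $\s\Phi$ of the differential is at most quadratic in the fields, I would write $\s=\s_0+\s_1$ with $\s_0$ the linear (Koszul) part and $\s_1$ the quadratic part; the master equation forces $\s_0^2=\s_1^2=0$ and $\s_0\s_1+\s_1\s_0=0$, so this is a genuine two-term filtration. The Koszul differential $\s_0$ contracts each antifield against a derivative of its conjugate field, so $E_1=H(\A,\s_0)$ is the free graded-commutative algebra over $\CO$ on the surviving zero-modes: $\P_\mu,\Theta^\mu$ in degree $0$, $\E^+,\Psi^+$ in degree $-1$, and $\C,\Gamma$ in degree $+1$. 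Reading off the quadratic terms and projecting gives the induced differential
\begin{align*}
  \s_1\E^+ &= -\half\eta^{\mu\nu}\P_\mu\P_\nu , &
  \s_1\Psi^+ &= -\P_\mu\Theta^\mu + 2\E^+\Gamma , \\
  \s_1\C &= \Gamma^2 , & \s_1\Theta^\mu &= -\eta^{\mu\nu}\P_\nu\Gamma ,
\end{align*}
together with $\s_1\P_\mu=\s_1\Gamma=0$ and $\s_1 f = -\Gamma\,\Theta^\mu(\p f/\p x^\mu) - \C\,\eta^{\mu\nu}\P_\nu(\p f/\p x^\mu)$ for $f\in\CO$. The whole content of the theorem is the computation of $E_2=H(E_1,\s_1)$.

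The key observation is that $\E^+$, $\C$ and $\Psi^+$ act as Koszul--Tate generators: their differentials exhibit $\eta^{\mu\nu}\P_\mu\P_\nu$, $\Gamma^2$ and $\P_\mu\Theta^\mu$ as coboundaries, so these three elements die in $E_2$. This is exactly the passage from the free algebra $E_1$ to the ring $\R=\CO[\Theta^\mu,\P_\mu,\Gamma]/(\P_\mu\Theta^\mu,\eta^{\mu\nu}\P_\mu\P_\nu,\Gamma^2)$, on which the residual differential is $\Q\Gamma$ (the $\C$-term of $\s_1 f$ drops out once $\C$ has been consumed). Although $\Q^2=\eta^{\mu\nu}\P_\mu\,\p/\p x^\nu\ne0$, it is multiplied by $\Gamma^2=0$ in $\R$, so $(\R,\Q\Gamma)$ is a genuine complex concentrated in degrees $0$ and $1$. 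This yields the classes $\xi^0(u)$ for $u\in H^0(\R)$ in degree $0$ and $\xi^1(v)$ for $v\in H^1(\R)$ in degree $1$, and \emph{nothing} in degrees $\ge2$, since the relation $\s_1(\Gamma^{j-2}\C)=\Gamma^j$ kills every $\Gamma^{\,j}$ with $j\ge2$.

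The source of the anomalous classes is that this Koszul reduction breaks down at the top exterior degree in $\Theta$: because $\Om^2=0$, the contracting homotopies built from $\C,\E^+,\Psi^+$ are not defined on multiples of $\Om$, and a tower of $\Gamma$-torsion classes survives in every degree $\le0$. I would detect these through the localization $\A_\gamma$: the primitives $A_k(f)$ and $Z_k(f)$ lie in $\A_\gamma$ but not in $\A$, so $\alpha_k(f)=\s A_k(f)$ and $\zeta_k(f)=\s Z_k(f)$ are genuine cocycles of $\A$ that become coboundaries after inverting $\gamma$. The computation that they exhaust each negative degree mirrors the $d=0$ case, with $f\in\CO$ and the volume form $\Om$ replacing bare monomials, and with the identities $\theta^\mu\Om=0$ accounting for the vanishing of the would-be $\gamma^{-1}$ terms. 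In degree $0$ the constants are exceptional, since $\zeta_0(1)=-\xi^0(\iota(\P)\Om)$ already lies in the image of $\xi^0$; this forces $g$ to range over $\CO/\CC$ while $f$ ranges over all of $\CO$.

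Finally I would observe that the classes $\alpha_k(f),\zeta_k(g),\xi^0(u),\xi^1(v)$ are honest $\s$-cocycles in $\A$, hence permanent cycles; as they span $E_2$, the spectral sequence collapses there, and reassembling the degrees gives the four cases of the statement. The main obstacle is precisely the $E_2$ computation of the middle two paragraphs: disentangling the finite ``physical'' cohomology $H^*(\R)$, produced by the Koszul resolution of $\P_\mu\Theta^\mu$, $\eta^{\mu\nu}\P_\mu\P_\nu$ and $\Gamma^2$ while the matter coupling descends to $\Q\Gamma$, from the infinite tower of $\gamma$-torsion classes supported on $\Om$, and verifying that the latter are spanned exactly by $\alpha_k(f)$ and $\zeta_k(g)$ with no further relations.
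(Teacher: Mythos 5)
Your overall strategy is the same as the paper's: filter by polynomial degree so that the Koszul part $\s_0$ acts first, reduce to the free algebra on the zero-modes $\X^\mu,\Theta^\mu,\P_\mu,\E^+,\Psi^+,\C,\Gamma$ over $\CO$, and then recognize the answer as $H^*(\R)$ plus a tower of classes supported on $\Om$ whose primitives live only in the localization $\A_\gamma$. You also correctly spot the special identity $\xi^0(\iota(\P)\Om)=-\zeta_0(1)$ that forces $g\in\CO/\CC$ in degree $0$. But there are two problems. The minor one: your claimed two-term decomposition $\s=\s_0+\s_1$ with $\s_1^2=0$ does not literally hold for general $\CO$ (smooth or analytic functions carry no polynomial degree in $x^\mu$), so $\s$ acting on $f(x)$ raises the filtration by $2$, not $1$; the differential genuinely splits into three filtration levels. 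The paper handles this with the $\sigma$-dependent degree table, under which the quadratic terms distribute over filtration degrees $1$ and $2$: the constraints $-\half\eta^{\mu\nu}\P_\mu\P_\nu$ and $-\P_\mu\Theta^\mu$ act as $\s_1$ on $E_1$, while all $\Gamma$- and $\C$-dependent terms (including $2\E^+\Gamma$, $\Gamma^2$, and the action on $\CO$) are postponed to $\s_2$ on $E_2$. Your single combined differential conflates these two pages.

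The major one is that the step you yourself flag as ``the main obstacle'' is the entire content of the proof, and you give no mechanism for carrying it out. The paper's separation of pages is not cosmetic: because $\s_1$ involves only $\E^+,\Psi^+$ and the two constraints, a general cocycle $z=\sum_{j>0}(\Psi^+)^{j-1}\E^+a_j+\sum_{j\ge0}(\Psi^+)^jb_j$ can be analyzed explicitly using the fact that the kernel of multiplication by $\P_\mu\Theta^\mu$ on $\CO[\Theta^\mu,\P_\mu,\C,\Gamma]$ is spanned by multiples of $\P_\mu\Theta^\mu$ together with $f\,\Om$ for $f\in\CO[\C,\Gamma]$ (a Koszul-complex statement). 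This produces the normal form $z=[b_0]+\sum_j([A_j(f_j)]+[B_j(g_j)])$ for $E_2$, after which $\s_2[B_j(f)]=[A_j(\Gamma f)]$ eliminates the $\Gamma$-dependence and leaves exactly $\alpha_k(f)$, $\zeta_k(f)$ and $H^*(\R)$ on $E_3$. With your merged differential, the cocycle condition couples the constraint relations to the $\Gamma$-action and to $\Q$ acting on $\CO$ all at once, and you would still owe an argument that no classes beyond those you list survive and that no further identifications occur; asserting that the named cocycles ``span $E_2$'' is precisely what must be proved. You should either adopt the finer filtration or supply an explicit contracting homotopy for your combined complex away from the $\Om$-supported part.
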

\begin{proof}
  The proof uses a filtration on $\A$, defined by assigning to the
  fields the following filtration degrees:
  \begin{center}
    \begin{tabular}{|C|C|C|} \hline
      \hbox{\ $\Phi$\ } &
      \hbox{\ $\deg(\Phi)$\ } &
      \hbox{$\deg(\Phi^+)$} \\ \hline
      x & 0 & 2\sigma \\
      \theta & \sigma & \sigma \\
      p & 2\sigma & 0 \\
      e & 2-2\sigma & 4\sigma-1 \\
      \psi & 2-\sigma & 3\sigma-1 \\
      c & 2-2\sigma & 4\sigma-1 \\
      \gamma & 2-\sigma & 3\sigma-1 \\ \hline
    \end{tabular}
  \end{center}
  If $\sigma$ is chosen in the interval $\frac13\le\sigma\le1$, the
  filtration degrees are all non-negative, proving that the spectral
  sequence converges in this case. We will see that in fact, the
  spectral sequence converges for all values of $\sigma$. This will be
  convenient in the sequel to this paper, where we consider a target
  with non-trivial magnetic field: to handle this case, we will need
  to choose $\sigma$ negative.

  The filtration degrees of the terms of the BV differential $\s$ are
  independent of $\sigma$, and lie between $0$ and $2$. The
  differential $\s_0$ on $E_0$ is the following Koszul differential:
  \begin{align*}
    & & \s_0 c^+ &= \p e^+ & \s_0 \gamma^+ &= \p \psi^+ \\
    \s_0 x^+_\mu &= - \p p_\mu
    & \s_0 \theta^+_\mu &= \eta_{\mu\nu} \p \theta^\nu
    & \s_0 p^{+\mu} &= \p x^\mu
    & \s_0 e^+ &= 0 & \s_0 \psi^+ &= 0 \\
    \s_0 x^\mu &= 0 & \s_0 \theta^\mu &= 0 & \s_0 p_\mu &= 0
    & \s_0 e &= - \p c & \s_0 \psi &= \p\gamma \\
    & & \s_0 c &= 0 & \s_0 \gamma &= 0
  \end{align*}
  We see that the cohomology $E_1$ of the differential $\s_0$, is a
  graded polynomial ring in the following generators:
  \begin{equation*}
    \renewcommand{\arraystretch}{1.4}
    \begin{tabular}{|r|c|} \hline
      $\gh$ & \textup{generators} \\ \hline
      $-1$ & $\E^+=\int e^+\,dt$, $\Psi^+=\int \psi^+\,dt$ \\
      $0$ & $\X^\mu=\int x^\mu\,dt$, $\Theta^\mu=\int \theta^\mu\,dt$,
            $\P_\mu=\int p_\mu\,dt$ \\
      $1$ & $\C=\int c\,dt$, $\Gamma=\int\gamma\,dt$ \\
      \hline
    \end{tabular}
  \end{equation*}
  The differential $\s_1$ on $E_1$ is given by the formulas
  \begin{align*}
    \s_1\E^+ &= - \half \eta^{\mu\nu} \P_\mu \P_\nu &
    \s_1\Psi^+ &= - \P_\mu \Theta^\mu .
  \end{align*}
  A cochain in $E_1$ has the general form
  \begin{equation*}
    z = \sum_{j>0} (\Psi^+)^{j-1} \E^+ a_j + \sum_{j\ge0} (\Psi^+)^j b_j ,
  \end{equation*}
  where $a_j,b_j\in\CO[\Theta^\mu,\P_\mu,\C,\Gamma]$. The differential
  of $z$ equals
  \begin{equation*}
    \s_1z = \sum_{j>0} j (\Psi^+)^{j-1} \E^+ \bigl( \P_\mu\Theta^\mu
    \bigr) a_{j+1} - \sum_{j\ge0} (\Psi^+)^j \bigl( (j+1) \bigl(
    \P_\mu\Theta^\mu \bigr) b_{j+1} + \bigl( \half \eta^{\mu\nu}
    \P_\mu\P_\nu \bigr) a_{j+1} \bigr) .
  \end{equation*}
  The equations
  \begin{equation}
    \label{cocycle}
    j \bigl( \P_\mu\Theta^\mu \bigr) b_j = - \bigl( \half
    \eta^{\mu\nu} \P_\mu\P_\nu \bigr) a_j
  \end{equation}
  imply the vanishing of $\s_1z$. To see this, multiply both sides of
  the equation \eqref{cocycle} by $\P_\mu\Theta^\mu$. The left-hand
  side vanishes: since $\eta^{\mu\nu} \P_\mu\P_\nu$ is not a
  zero-divisor, it follows that
  \begin{equation}
    \label{ptheta}
    \bigl( \P_\mu\Theta^\mu \bigr) a_j = 0 .
  \end{equation}

  Associate to $f\in\CO[\C,\Gamma]$ the cocycles in $E_1$
  \begin{align*}
    A_j(f) &= 2j (\Psi^+)^{j-1} \E^+ f \, \Om - (\Psi^+)^j f
    \iota(\P)\Om , &
    B_j(f) &= (\Psi^+)^j f \, \Om .
  \end{align*}

  Since the operation of multiplication by $\P_\mu\Theta^\mu$ may be
  interpreted as the differential of a Koszul complex, it has kernel
  spanned by elements of the form $\bigl( \P_\mu\Theta^\mu \bigr) u$,
  where $u\in\CO[\Theta^\mu,\P_\mu,\C,\Gamma]$, and $f\,\Om$, where
  $f\in\CO[\C,\Gamma]$. Thus, we may decompose a solution $a_j$ of
  \eqref{ptheta} as a sum
  \begin{equation*}
    a_j = \bigl( \P_\mu\Theta^\mu \bigr) u_j + 2jf_j \, \Om ,
  \end{equation*}
  where $u_j\in\CO[\Theta^\mu,\P_\mu,\C,\Gamma]$ and
  $f_j\in\CO[\C,\Gamma]$. This allows us to rewrite the cocycle $z$ as
  \begin{equation*}
    z = b_0 + \sum_{j>0} A_j(f_j)
    + \sum_{j>0} (\Psi^+)^j \Bigl( b_j + f_j \iota(\P)\Om + \bigl(
    \eta^{\mu\nu} \P_\mu\P_\nu \bigr) u_j \Bigr) + \s_1 \sum_{j>0}
    2 (\Psi^+)^j \E^+ u_j .
  \end{equation*}
  Applying \eqref{cocycle}, we see that
  \begin{equation*}
    \bigl( \P_\mu\Theta^\mu \bigr) \bigl( b_j + f_j \iota(\P)\Om +
    \bigl( \eta^{\mu\nu} \P_\mu\P_\nu \bigr) u_j \bigr) = 0 .
  \end{equation*}
  Hence there is a decomposition
  \begin{equation*}
    b_j + f_j \iota(\P)\Om + \bigl(\eta^{\mu\nu} \P_\mu\P_\nu
    \bigr) u_j = (j+1) \bigl( \P_\mu\Theta^\mu \bigr) v_j + g_j\,\Om ,
  \end{equation*}
  where $g_j\in\CO[\C,\Gamma]$. This allows us to rewrite the cocycle
  $z$ as
  \begin{equation*}
    z = b_0 + \sum_{j>0} \Bigl( A_j(f_j) + B_j(g_j) \Bigr)
    + \s_1 \sum_{j>0} \Bigl( 2 (\Psi^+)^j \E^+ u_j + (\Psi^+)^{j+1}
    v_j \Bigr) .
  \end{equation*}
  We conclude that a cohomology classes in $E_2=H^*(E_1,\s_1)$ take the
  general form
  \begin{equation*}
    z = [b_0] + \sum_{j>0} \Bigl( [A_j(f_j)] + [B_j(g_j)] \Bigr) ,
  \end{equation*}
  where $[b_0]$ is an element of the ring
  \begin{equation*}
    \CO[\Theta^\mu,\P_\mu,\C,\Gamma]/\bigl( \P_\mu\Theta^\mu ,
    \half\eta^{\mu\nu}\P_\mu\P_\nu \bigr) .
  \end{equation*}

  The differential $\s_2$ on $E_2$ is induced by the graded
  differential taking the following values on the generators of $E_1$:
  \begin{align*}
    \s_2 \E^+ &= 0 & \s_2 \Psi^+ &= 2 \E^+ \Gamma \\
    \s_2 \X^\mu &= - \eta^{\mu\nu} \P_\nu \C - \Theta^\mu \Gamma &
    \s_2 \Theta^\mu &= - \eta^{\mu\nu} \P_\nu \Gamma \\
    \s_2 \C &= \Gamma^2 & \s_2 \Gamma &= 0
  \end{align*}
  We have
  \begin{equation*}
    \s_2(B_j(f)) = A_j(\Gamma f) - \s_1 \left( \frac{(\Psi^+)^{j+1}
        \C \iota(\grad f)\Om}{j+1} \right) .
  \end{equation*}
  This establishes the formula $\s_2[B_j(f)]=[A_j(\Gamma f)]$ and
  $\s_2[A_j(f)]=0$ in $E_2$. Thus, the subcomplex of $E_2$ spanned by
  \begin{equation*}
    \bigl\{ [A_k(f)],[B_k(f)] \bigm| k>0,f\in\CO[\C,\Gamma]\bigr\}
  \end{equation*}
  contributes to $E_3$ a summand spanned by
  \begin{equation*}
    \bigl\{ [A_k(f)] \bigm| k>0,f\in\CO[\C] \bigr\} .
  \end{equation*}
  These elements of $E_3$ lift to the cocycles $\alpha_k(f)$ and
  $\zeta_k(f)$ in $\A$, showing that these classes in $E_3$ all
  survive to the last page $E_\infty$ of the spectral sequence.
  
  It remains to calculate the contribution of the elements of the form
  $[b_0]\in E_2$ to $E_3$. There is a quasi-isomorphism between the
  complexes
  \begin{equation*}
    \bigl( \CO[\Theta^\mu,\P_\mu,\C,\Gamma]/( \P_\mu\Theta^\mu ,
    \eta^{\mu\nu}\P_\mu\P_\nu) , \s_2 \bigr)
  \end{equation*}
  and $\R$, induced by sending the elements $\C$ and $\Gamma^2$ to
  zero and $\Gamma$ to $-\Gamma$. Thus the subspace of $E_2$
  consisting of elements of the form $[b_0]$ contributes a copy of the
  cohomology $H^*(\R)$ of $\R$ to $E_3$. These classes lift to the
  summand of $H^*(\A,\s)$ spanned by the classes
  \begin{equation*}
    \{\xi^0(u)\mid u\in H^0(\R) \} \oplus \{\xi^1(v)\mid v\in
    H^1(\R) \} ,
  \end{equation*}
  completing the proof of the theorem.
\end{proof}

We denote the transgressions of the cocycles $\alpha_k(f)$,
$\zeta_k(f)$, $\xi^0(u)$ and $\xi^1(v)$ obtained by applying the
vector field $\g$ by $\tilde\alpha_k(f)$, $\tilde\zeta_k(f)$,
$\tilde\xi^0(u)$ and $\tilde\xi^1(v)$. The following corollary follows
by applying the long-exact sequence \eqref{long}.
\begin{corollary}
  \begin{equation*}
    H^{-k}(\F,\s) =
    \begin{cases}
      \Bigl\{ \int \bigl( \alpha_k(f) + \zeta_k(g) +
      \tilde\alpha_k(\tilde{f}) + \tilde\zeta_k(\tilde{g}) \bigr) \,dt \,
      \Big| \, f,g,\tilde{f},\tilde{g} \in\CO \Bigr\} & k>1 , \\[10pt]
      \Bigl\{ \int \bigl( \tilde{\xi}^{-1}(u) + \alpha_1(f) + \zeta_1(g)
      + \tilde\alpha_1(\tilde{f}) + \tilde\zeta_1(\tilde{g}) \bigr) \,dt
      \,\Big|\, & \\
      \hfill u\in H^0(\R), f,g,\tilde{g} \in\CO , \tilde{f} \in
      \CO/\CC \Bigr\} & k=1 , \\[10pt]
      \Bigl\{ \int \bigl( \xi^0(u) + \tilde\xi^0(v) + \alpha_0(f) +
      \zeta_0(g) \bigr) \, dt \,\Big|\, & \\
      \hfill u \in H^0(\R) , v\in H^1(\R) , f \in \CO/\CC , g\in \CO
      \Bigr\} & k=0 , \\[10pt]
      \Bigl\{\int\xi^1(v)\,dt\,\Big|\, v\in H^1(\R) \Bigr\} & k=-1 ,
      \\[10pt]
      0 & k<-1 .
    \end{cases}
  \end{equation*}
\end{corollary}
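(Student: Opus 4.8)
The plan is to feed the description of $H^*(\A,\s)$ just obtained into the long exact sequence \eqref{long}. The key input is \eqref{homotopy}: since $[\s,\g]=\p$, the operator $\p$ is null-homotopic on the complex $(\A,\s)$, and so acts as zero on $H^*(\A,\s)$. Hence every arrow labelled $\p$ in \eqref{long} vanishes. The same is true of the arrow $\p\colon H^0(\A/\C,\s)\to H^0(\A,\s)$, because a degree-zero cocycle $f$ satisfies $\p f=\s\g f$, which is a coboundary. Therefore \eqref{long} collapses into short exact sequences
\begin{equation*}
  0\to H^{-k}(\A,\s)\to H^{-k}(\F,\s)\to H^{-k+1}(\A,\s)\to 0 \qquad (k\neq 1),
\end{equation*}
together with, for $k=1$, the variant having $H^0(\A/\C,\s)=H^0(\A,\s)/\CC$ in place of $H^0(\A,\s)$. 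As everything is a vector space over $\CC$, each of these sequences splits.

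Next I would make the splitting explicit through transgression. For any $\s$-cocycle $\omega$, \eqref{homotopy} gives $\s\g\omega=\p\omega$, so $\int\g\omega\,dt$ is $\s$-closed in $\F$ and the connecting homomorphism of \eqref{long} sends its class to the class of $\omega$. Thus $\g$ furnishes a section of each connecting map, and applying it to the generators $\alpha_{k-1}(f),\zeta_{k-1}(f),\xi^0(u),\xi^1(v)$ of $H^*(\A,\s)$ yields exactly the transgressed cocycles $\tilde\alpha_k(f),\tilde\zeta_k(f),\tilde\xi^{-1}(u),\tilde\xi^0(v)$ named before the statement.

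With these two facts the computation is immediate away from the bottom of the complex. For $k>1$ the theorem gives $H^{-k}(\A,\s)=\{\alpha_k(f)+\zeta_k(g)\}$ and $H^{-k+1}(\A,\s)=\{\alpha_{k-1}(\tilde f)+\zeta_{k-1}(\tilde g)\}$, and the split short exact sequence combines them into $\{\int(\alpha_k(f)+\zeta_k(g)+\tilde\alpha_k(\tilde f)+\tilde\zeta_k(\tilde g))\,dt\}$ with $f,g,\tilde f,\tilde g\in\CO$. For $k<-1$ both outer terms vanish, so $H^{-k}(\F,\s)=0$; and for $k=-1$ the term $H^2(\A,\s)$ vanishes, leaving $H^1(\F,\s)=H^1(\A,\s)=\{\int\xi^1(v)\,dt\}$.

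The only genuine work is at $k=0$ and $k=1$, and this is where I expect the main obstacle. Here the modification $\tilde\A^0=\A^0/\C$, and the fact that integrals of total derivatives vanish in $\F$, introduce relations among the constant modes that are absent in $\A$, and these must be propagated through the connecting maps. On the one hand $\tilde\xi^{-1}(1)=\g(\xi^0(1))=\g(1)=0$, so the constant direction of $H^0(\R)$ transgresses to zero and $u$ may be taken to range over all of $H^0(\R)$; on the other hand the identity $\xi^0(\iota(\P)\Om)=-\zeta_0(1)$ ties the $\zeta$-family to the image of $\xi^0$. Tracking these carefully is what redistributes the single $\CC$-quotient between the $\alpha$- and $\zeta$-summands, producing the ranges $f\in\CO/\CC,\ g\in\CO$ at $k=0$ and $\tilde f\in\CO/\CC,\ \tilde g\in\CO$ at $k=1$, in contrast to the ranges appearing in $H^*(\A,\s)$. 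Once this bookkeeping of constants is settled, the remaining cases assemble exactly as in the corollary for $d=0$.
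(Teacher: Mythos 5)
Your proposal is correct and follows the same route as the paper, which likewise deduces the corollary from the long exact sequence \eqref{long} together with the vanishing of $\p$ on $H^*(\A,\s)$ forced by the homotopy \eqref{homotopy}, with $\g$ supplying the transgressed representatives. In fact you give more detail than the paper does (its proof is the single sentence that the corollary ``follows by applying the long-exact sequence''), and your identification of the $k=0,1$ bookkeeping --- the relations $\g(1)=0$ and $\xi^0(\iota(\P)\Om)=-\zeta_0(1)$ redistributing the $\CC$-quotients --- is exactly the point the paper leaves implicit.
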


In the remainder of this section, we calculate some examples of graded
Lie brackets among the cohomology groups $H^*(\F,\s)$. All of the
brackets among classes in the image of $H^*(\A,\s)\to H^*(\F,\s)$
vanish. As for brackets among the classes $\xi^i(u)$ and
$\tilde\xi^i(u)$, we have $\{ \xi^i(u) , \xi^j(v) \} = 0$, and
\begin{align*}
  \{ \tilde\xi^i(u) , \xi^j(v) \} &= \xi^{i+j}( \{u,v\} ) , &
  \{ \tilde\xi^i(u) , \tilde\xi^j(v) \}
  &= (-1)^{\pa(u)} \tilde\xi^{i+j}( \{u,v\} ) ,
\end{align*}
where
\begin{equation*}
  \{ u , v \} = (-1)^{\pa(u)} \frac{\p u}{\p x^\mu} \frac{\p v}{\p
    p_\mu} - (-1)^{\pa(u)} \frac{\p u}{\p p_\mu} \frac{\p v}{\p x^\mu}
  - \eta^{\mu\nu} \frac{\p u}{\p\theta^\mu} \frac{\p
    v}{\p\theta^\nu} .
\end{equation*}
Here, we understand $\xi^k(u)$ and $\tilde\xi^k(u)$ to be $0$ when
$k>1$.

We next calculate the bracket of $\tilde\alpha_k(f)$ with the classes
$\alpha_\ell(g)$, $\zeta_\ell(g)$, $\xi^i(u)$, taking the calculation
in the case $d=0$ as the model. We have
\begin{align*}
  \g(A_k(f)) &= (k+1) (\psi^+)^k \gamma^+ c f \Om \gamma^{-1} -
               (\psi^+)^{k+1} e f \Om \gamma^{-1} - (\psi^+)^{k+1} ( p^+ \cdot
               \grad f ) c \Om \gamma^{-1} \\
             &\quad + (\psi^+)^{k+1} c f \iota(\theta^+) \Om
               \gamma^{-1} + (-1)^d (\psi^+)^{k+1} c f \Om \psi
               \gamma^{-2} .
\end{align*}
Using this formula, it is not hard to show that
$\{\g(A_k(f)),\alpha_\ell(g)\} = \{\g(A_k(f)),\alpha_\ell(g)\} = 0$,
at least when $d>2$, and hence that
$\{\tilde\alpha_k(f),\alpha_\ell(g)\} =
\{\tilde\alpha_k(f),\zeta_\ell(g)\}=0$.
When $d\le2$, the brackets $\{\g(A_k(f)),\alpha_\ell(g)\}$ and
$\{\g(A_k(f)),\alpha_\ell(g)\}$ are coboundaries in the complex $\A$,
leading to the same conclusion.

We also see that
\begin{equation*}
  \{ \g(A_k(f)) , u \} = (-1)^d \, (\psi^+)^{k+1} c \left(
  \frac{\p f}{\p x^\mu} \frac{\p u}{\p p_\mu} \right) \Om \gamma^{-1} .
\end{equation*}
Let $\CE$ be the Euler vector field in the variables $p_\mu$:
\begin{equation*}
  \CE = p_\mu \frac{\p\ }{\p p_\mu} .
\end{equation*}
There exist functions
\begin{equation*}
  U^{\mu\nu}(x,p) = ( \CE + 1 )^{-1} \frac{\p^2u}{\p p_\mu\p
    p_\nu}(x,0,p)
\end{equation*}
such that
\begin{equation*}
  \frac{\p u}{\p p_\mu}(x,0,p) = \frac{\p u}{\p p_\mu}(x,0,0) + p_\nu
  U^{\mu\nu}(x,p) .
\end{equation*}
It follows that
\begin{multline*}
  \{ \g(A_k(f)) , u \} = (-1)^d \, A_k\left( \frac{\p f}{\p x^\mu}
    \frac{\p u}{\p p_\mu}(x,0,0) \right) + \frac{(-1)^d}{k+2} \, \s
  \Bigl( (\psi^+)^{k+2} c \iota( \grad f \cdot U ) \Om \gamma^{-1}
  \Bigr) \\
  + \frac{(-1)^d}{k+2} \, \Bigl( 2(k+2)(\psi^+)^{k+1} e^+ c \iota(
  \grad f \cdot U ) \Om \\ + (\psi^+)^{k+2} \iota( \grad f \cdot U ) \Om
  \gamma \bigr) + (\psi^+)^{k+2} c \iota( \grad f \cdot U ) \iota(p)
  \Om \Bigr) .
\end{multline*}
Suppose that $\Q u=0$, so that $u$ determines a cohomology class
$\xi^0(u)\in H^0(\F,\s)$. Applying the differential $\s$ to both
sides, we see that
\begin{multline*}
  \{ \tilde\alpha_k(f) , \xi^0(u) \} = (-1)^{d+1} \, \alpha_k\left(
    \frac{\p f}{\p x^\mu} \frac{\p u}{\p p_\mu}(x,0,0) \right) - \p \{
  A_k(f) , u \} \\
  - \frac{(-1)^d}{k+2} \, \s\Bigl( 2(k+2)(\psi^+)^{k+1} e^+ c \iota(
  \grad f \cdot U ) \Om \\ + (\psi^+)^{k+2} \iota( \grad f \cdot U )
  \Om \gamma \bigr) + (\psi^+)^{k+2} c \iota( \grad f \cdot U )
  \iota(p) \Om \Bigr) .
\end{multline*}
But $\{A_k(f),u\}$ vanishes, while the terms following it yield a
coboundary in $\A$. Thus we see that 
\begin{equation*}
  \{ \tilde\alpha_k(f) , \xi^0(u) \} = (-1)^{d+1} \, \alpha_k\left(
    \frac{\p f}{\p x^\mu} \frac{\p u}{\p p_\mu}(x,0,0) \right) .
\end{equation*}
A similar argument shows that
$\{\tilde\alpha_k(f) , \xi^1(v) \} = 0$.

\acknowledgments

I am grateful to Chris Hull for introducing me to the first-order
formalism of the spinning particle.

Igor Zavkhine informed me that an earlier version of this paper
contained an incorrect calculation of $H^{-1}(\F,\s)$. This revised
version of the paper corrects this, and calculates the cohomology in
positive degrees as well.

This research is partially supported by EPSRC Programme Grant
EP/K034456/1 ``New Geometric Structures from String Theory'' and
Collaboration Grant \#243025 of the Simons Foundation.

\begin{bibdiv}

  \begin{biblist}

    \bib{BBH}{article}{
      author={Barnich, Glenn},
      author={Brandt, Friedemann},
      author={Henneaux, Marc},
      title={Local BRST cohomology in gauge theories},
      journal={Phys. Rep.},
      volume={338},
      date={2000},
      number={5},
      pages={439--569},
    }

    \bib{spinning}{article}{
      author={Brink, L.},
      author={Deser, S.},
      author={Zumino, B.},
      author={Di Vecchia, P.},
      author={Howe, P.},
      title={Local supersymmetry for spinning particles},
      journal={Physics Letters B},
      volume={64},
      number={4},
      date={1976},
      pages={435--438}
    }

    \bib{Costello}{book}{
      author={Costello, Kevin},
      title={Renormalization and effective field theory},
      series={Mathematical Surveys and Monographs},
      volume={170},
      publisher={American Mathematical Society, Providence, RI},
      date={2011},
    }

    \bib{FK}{article}{
      author={Felder, Giovanni},
      author={Kazhdan, David},
      title={The classical master equation},
      note={With an appendix by Tomer M. Schlank},
      conference={
        title={Perspectives in representation theory},
      },
      book={
        series={Contemp. Math.},
        volume={610},
        publisher={Amer. Math. Soc., Providence, RI},
      },
      date={2014},
      pages={79--137},
    }
    
  \end{biblist}

\end{bibdiv}

\end{document}